\newtheorem{theorem}{Theorem}
\newtheorem{cor}{Corollary}
\author{\IEEEauthorblockN{Chang Liu$^\star$, Ming Ding$^\dag$, Chuan Ma$^\star$, Qingzhi Li$^\star$, Zihuai Lin$^\star$ and Ying-Chang Liang$^{\ddagger}$}\\
{$^\star$ School of Electrical and Information Engineering, University of Sydney, Sydney, NSW, Australia}\\
{$^\dag$ Data61, CSIRO, Australia}\\
{$^\ddagger$ University of Electronic Science and Technology of China (UESTC), Chengdu, China}\\
\small{Email: \{cliu6845, qili3928\}@uni.sydney.edu.au, \{chuan.ma, zihuai.lin\}@sydney.edu.au, Ming.Ding@data61.csicro.au, liangyc@ieee.org}
\thanks{The work of Y.-C. Liang is funded by National Natural Science Foundation of China under Grants 61571100, 61631005 and 61628103.}}
\begin{document}

\title{Performance Analysis for Practical Unmanned Aerial Vehicle Networks with LoS/NLoS Transmissions}
\maketitle
\begin{abstract}
  In this paper, we provide a performance analysis for practical unmanned aerial vehicle (UAV)-enabled networks. By considering both line-of-sight (LoS) and non-line-of-sight (NLoS) transmissions between aerial base stations (BSs) and ground users, the coverage probability and the area spectral efficiency (ASE) are derived.
  Considering that there is no consensus on the path loss model for studying UAVs in the literature, in this paper,
  three path loss models, i.e., high-altitude model, low-altitude model and ultra-low-altitude model, are investigated and compared.
  Moreover,
  the lower bound of the network performance is obtained assuming that UAVs are hovering randomly according to homogeneous Poisson point process (HPPP), while the upper bound is derived assuming that UAVs can instantaneously move to the positions directly overhead ground users.
  From our analytical and simulation results for a practical UAV height of 50 meters, we find that the network performance of the high-altitude model and the low-altitude model exhibit similar trends, while that of the ultra-low-altitude model deviates significantly from the above two models.
  In addition, the optimal density of UAVs to maximize the coverage probability performance has also been investigated.
  \end{abstract}
\section{Introduction}
 Due to the flying nature of unmanned aerial vehicles (UAVs), base stations (BSs) can be mounted on the UAV to support wireless communications and improve the performance of cellular networks. For example, UAV-mounted base stations (UAV-BSs) are introduced when a natural disaster interrupts communications or ground base stations are overloaded \cite{zhang2017spectrum}. Compared with ground BSs, the flexibility of UAV-BSs allows them to adapt their locations to the demand of users.

Most of the literature on the UAV-BS focuses on its deployment. The work in \cite{ono2016wireless} proposed that fixed-wing UAVs at a constant height are more applicable for aerial networks due to less power consumption.
Positions of UAV-BSs were modeled as a 3D Poisson Point Process (3D-PPP) distribution with a limited height in \cite{zhang2017spectrum}, but the analysis in \cite{gruber2016role} showed that the flexible height of UAV is not as helpful as a well-chosen fixed altitude. In \cite{galkin2016deployment}, UAV-mounted mobile base stations were deployed in a fixed altitude and placed along an optimal trajectory to cover as much as user equipment (UE) whose locations are already known in a given area.

Beyond the UAV deployment, the performance of 3D networks also attracts much attention in the existing literature. The work in \cite{gruber2016role} analyzed the average downlink spectral efficiency without considering the environment noise, while the authors of \cite{mozaffari2015drone} evaluated the performance of UAV at a low altitude platform in terms of the coverage area and transmit power. 
Similarly, the optimal deployment model in \cite{alzenad20173d} led to the analysis of coverage and transmit power. Furthermore, the analysis in \cite{mozaffari2016unmanned} introduced a tractable analytical framework for the coverage and the rate in UAV based network with the coexistence of device-to-device (D2D) network.

Although the path loss model has been considered as a key factor in the performance analysis for UAV networks, there is no consensus on this issue yet. For example, the work in \cite{zhang2017spectrum} and \cite{galkin2016deployment} only considered the UAV hovering in a LoS dominated network for simplicity. To conduct a practical analysis for UAV, the authors of \cite{bor2016efficient} proposed a general path loss model which considers both LoS and NLoS connections and their occurrence probabilities, depending on the elevation angle between a UAV and a user. Despite that this model has been widely adopted as the high-altitude model (a typical height is around 1000 meters), the network performance has not been investigated due to the complexity of the proposed model. On the other hand, the work in \cite{ding2016please} provided a network analysis of the terrestrial cellular network where the antenna height between BSs and users is around 10m$\sim$30m, together with 3GPP LoS and NLoS models. Considering that the height of UAVs is comparable with that of ground base stations in future UAV networks, the curent macrocell-to-UE model (a typical height is around 32 meters) and picocell-to-UE model (a typical height is around 10 meters) proposed for terrestrial communication in 3GPP standard can also be applied to the UAV-based network. Such macrocell-to-UE model and picocell-to-UE model are referred to as the low-altitude model and the ultra-low-altitude model hereafter. To our best knowledge, the path loss model for UAV-BSs has not been adequately explored in the literature, so the intriguing question arises: which is the most appropriate path loss model when UAVs fly at a practical height, e.g., 50m$\sim$100m? The reasons for such medium-altitude deployment of UAVs are: i) UAVs should not fly too high (e.g., larger than 100 meters) because of the recently discovered network capacity crash \cite{ding2016performance}, and ii) UAVs should not fly too low (e.g., lower than 10 meters) due to obvious safety reasons.

Motivated by the above theoretical gap and to answer such fundamental question, in this paper, we analyze the performance of UAV-enabled networks on the condition of different path loss models. An interesting finding in our study is that although the 3GPP path loss model is developed for terrestrial communications, the network performance based on the low-altitude model (a typical height is around 32 meters) is similar to that based on the high-altitude model, when a practical UAV height of 50 meters is considered. Moreover, the optimal UAV density to maximize the coverage probability and the area spectral efficiency (ASE) performance can be found from numerical results.

The rest of this paper is structured as follows. In Section II we introduce the system model of UAV-enabled networks. In section III, three path loss models are investigated and compared. In section IV, we derive the analytical expressions of the coverage probability and the ASE for a 3D network with PPP-distributed UAVs, and then we propose an ideal case with teleportation UAVs to find the upper bound of the network performance. Numerical results and discussions are provided in Section V and the conclusions are drawn in Section VI.

\section{System Model}
We consider a UAV network, where UAV aerial base stations follow a 3D-PPP distribution with a density $\lambda$ in an infinite 3D space $\mathbb{V}$, and the UAV height is set to $h$, that is $\mathbb{V}=\{(x,y,z):x,y\in \mathbb{R}, z=h \}$. Here, we consider practical values for $h$ around 50$\sim$100 meters. We consider such medium-altitude deployment of UAVs because UAVs should not fly too high (e.g., larger than 100 meters) since the recently discovered network capacity crash \cite{ding2016please}, and UAVs should not fly too low (e.g., lower than 10 meters) due to obvious safety reasons. User equipments (UEs) are Poisson distributed in the considered network with a density of $\lambda_{\textrm{UE}}$. Here, $\lambda_{\textrm{UE}}$ is assumed to be sufficiently larger than $\lambda$ so that each UAV has at least one associated UE in its coverage area. The 3D distance between an arbitrary UAV and an arbitary UE is denoted by $r$ in km. Considering practical LoS and NLoS transmissions, we propose to model the path loss associated with distance $r$ as a path loss function $\zeta(r)$.
Such $\zeta(r)$ is segmented into 2 pieces, where $\zeta^{\textrm{L}}(r)$ is the path loss function for LoS transmission, $\zeta^{\textrm{NL}}(r)$ is the path loss function for NLoS transmission and $\Pr^{\textrm{L}}(r)$ is the LoS probability function. In more detail,
\begin{itemize}
  \item $\zeta(r)$ is modeled as
  \begin{equation}\label{pathloss}
{\zeta}(r) = \left\{ {\begin{array}{*{20}{l}}
{\zeta^{\textrm{L}}(r) = A^{\textrm{L}}{r^{ - \alpha ^{\textrm{L}}}},~~~~~~\text{for LoS}}\\
{\zeta^{\textrm{NL}}(r) = A^{\textrm{NL}}{r^{ - \alpha ^{\textrm{NL}}}},~\text{for NLoS}}
\end{array}} \right.,
  \end{equation}
  with $A^{\textrm{L}}$ and $A^{\textrm{NL}}$ being the path losses at a reference distance $r=1$ and $\alpha^{\textrm{L}}$ and $\alpha^{\textrm{NL}}$ being the path loss exponents for the LoS and the NLoS cases in $\zeta(r)$, respectively. In practice, $A^{\textrm{L}}$, $A^{\textrm{NL}}$, $\alpha^{\textrm{L}}$ and $\alpha^{\textrm{NL}}$ are constants obtained from field tests \cite{ding2016performance}.
  \item $\Pr^{\textrm{L}} (r)$ is the probability function that a transmitter and a receiver have LoS connections. Also, the probability of NLoS is $\Pr^{\textrm{NL}} (r)=1-\Pr^{\textrm{L}} (r)$.
\end{itemize}
As a common practice in the field, each UE is assumed to be associated with the UAV that provides the strongest signal strength, and the multi-path fading between an arbitrary UAV-BS and an arbitrary UE is modeled as independently identical distributed (i.i.d.) Rayleigh fading. Thus, the channel gain denoted by $g$ can be modeled as an i.i.d. exponential random variable (RV). The transmit power of each UAV and the additive white Gaussian noise (AWGN) power at each UE are denoted by $P$ and $\sigma^2$, respectively.
%
\section{Discussion and Analysis of path loss models}
Since there is no consensus on proper path loss model for UAV-enabled networks, we choose three widely adopted path loss models and apply them to the considered UAV networks.

\subsection{High-altitude model}
The high-altitude model based on the elevation angle has been widely used in the satellite communication model, e.g., thousands of meters. The probability function that a transmitter and a receiver have a LoS connection at an elevation angle of $\theta$ can be expressed as $\Pr^{\textrm{L}}(\theta)$ \cite{mozaffari2016unmanned}:
\begin{equation}
 \begin{split}
 {{\Pr} ^{\textrm{L}}}(\theta ) = \frac{1}{{1 + C \exp \left( { - B [\theta  - C ]} \right)}},
 \end{split}
\end{equation}
where $B$ and $C$ are constant values that depend on the environment (rural, urban, dense urban, etc.). Furthermore, the elevation angle $\theta$ can be written as $\theta {\rm{ = }}\frac{{180}}{\pi } \arcsin \left( {\frac{h}{r}} \right)$, so the LoS probability function for this high-altitude model can be reformulated as a new function with respect to $r$:
\begin{equation}
 \begin{split}
 {{\Pr}_{\textrm{high}} ^{\textrm{L}}}(r ) = \frac{1}{{1 + C \exp \left( { - B [\frac{{180}}{\pi }\arcsin \left( {\frac{h}{r}} \right) - C ]} \right)}}.
 \end{split}
\end{equation}

\subsection{Low-altitude model }
 Provided that the practical height of UAV-BSs is usually limited to a medium altitude, like 50m and 100m.
 Such height is comparable to the antenna height of terrestrial base stations, we further analyze the path loss model proposed for 3GPP terrestrial communications and apply it to the considered UAV networks.

 In particular, the 3GPP macrocell-to-UE path loss model has been proposed for connection between a UE and its associated macrocell BS. Considering that the height of a macrocell base station is usually around 32m, which is slightly lower than the considered altitude of UAV around 50$\sim$100m, it is reasonable to use this model to study the UAV network. In this case, the LoS probability function for this low-altitude model can be expressed as \cite{3GPP2012TR36.828}
\begin{equation}
\begin{split}
{{\Pr}_{\textrm{low}} ^\textrm{L}} (r) = &\min \left( {0.018/r,1} \right) \times \left( {1 - \exp \left( { - r/0.063} \right)} \right) \\
&+ \exp \left( { - r/0.063} \right).
\end{split}
\end{equation}

\subsection{Ultra-low-altitude model}
To obtain a comprehensive insight of the proper path loss model for UAVs, we also introduce the 3GPP picocell-to-UE model as the ultra-low-altitude model, since the typical height of picocell base station is about 10m. In this case, the LoS probability function is defined as \cite{3GPP2012TR36.828}
\begin{equation}
\begin{split}
{{\Pr}_{\textrm{ultra}} ^\textrm{L}} (r)= &0.5 - \min \left( {0.5,5\exp \left( { - 0.156/r} \right)} \right) \\
&+ \min \left( {0.5,5\exp \left( { - r/0.03} \right)} \right).
\end{split}
\end{equation}

\subsection{The Comparison of the Three Path Loss Models}
\begin{figure}
  \centering
  \includegraphics[width=2.8in]{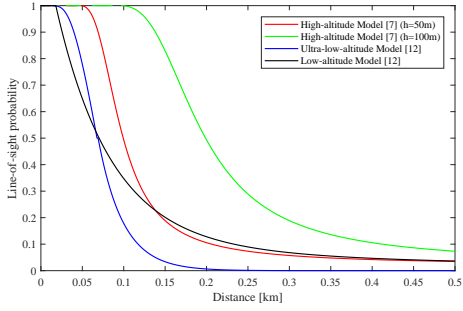}\\
  \caption{Comparison of LoS probability function}
  \label{compare_prob}
\end{figure}

Fig. \ref{compare_prob} compares the LoS probability functions for different path loss models. It can be seen from this figure that the LoS probability for the ultra-low-altitude model drops very quickly with respect to the distance, followed by the low-altitude model. Moreover, it should be noted that the high-altitude model generates different LoS probability functions for different altitudes.

\section{Analysis for the Proposed UAV Networks}
To analyze the performance of UAV-BSs based on the interested path loss models, we investigate the coverage probability and the ASE of the network in this section.
The coverage probability represents the probability that the typical user is covered by the associated UAV-BS and defined as the probability that the received signal-to-interference-noise-ratio (SINR) is larger than a pre-set threshold $\gamma$, which can be expressed as
\begin{equation}
 \begin{split}
 {p^{{\mathop{\rm cov}} }} = \Pr (\textrm{SINR} > \gamma ),
 \end{split}
\end{equation}
where SINR is expressed as
\begin{equation}
 \begin{split}
 \textrm{SINR} = \frac{{Pg\zeta (r)}}{{{I_r} + {N_0}}},
 \end{split}
\end{equation}
where $P$ and ${N_0}$ denote the transmission power of the UAV-BS and additive white Gauss noise (AWGN) power, respectively. Moreover, ${I_r}$ is the sum of interference from other UAV-BSs, and $g$ is the channel gain of Rayleigh fading and can be modeled as a RV which follows a exponential distribution with the mean value of one. It can be further written as
\begin{equation}
 \begin{split}
{I_r} = \sum\limits_{i:{b_i} \in \Phi \backslash {b_o}} {P{\beta _i}{g_{i}}}.
 \end{split}
\end{equation}
Obviously, when UAV-BSs are HPPP distributed and randomly hovering in the network, the network performance reaches a lower bound because the mobility of UAVs is completed ignored. Such lower-bound performance is characterized in the following Theorem 1.
\begin{theorem}\label{1}
 Considering the path loss of the LoS and the NLoS connection, the lower bound of the coverage probability ${p^{{\mathop{\rm cov}} }}(\lambda ,\gamma )$ can be expressed as
\begin{equation}
 \begin{split}
 {p_{\rm{lower}}^{{\rm{cov}}}}(\lambda ,\gamma ) = {T^\emph{{L}}} + {T^{\emph{{NL}}}},
 \end{split}
\end{equation}
where ${T^{\rm{L}}} = \int_h^\infty  {\Pr \left[ {\frac{{P\zeta ^{\rm{L}}(r)g}}{{{I_r} + {N_0}}} > \gamma } \right]} {f^{\rm{L}}}(r)dr$ and ${T^{\rm{NL}}} = \int_h^\infty  {\Pr \left[ {\frac{{P\zeta ^{\rm{NL}}(r)g}}{{{I_r} + {N_0}}} > \gamma } \right]} {f^{\rm{NL}}}(r)dr$.\\
The $f^{\emph{L}}(r)$ and $f^{\emph{NL}}(r)$ are expressed as
\begin{equation}
 \begin{split}
f^{\emph{L}}(r) = &\exp \left( { - \int_h^{{r_1}} {\left( {1 - {{\Pr }^{\emph{L}}}\left( {u} \right)} \right)2\pi u\lambda du} } \right)\\&\times \exp \left( { - \int_h^r {{{\Pr }^{\emph{L}}}\left( {u} \right)2\pi u\lambda du} } \right)\\&\times {\Pr} ^{\emph{L}}\left( {r} \right) \times 2\pi r\lambda,
 \end{split}
\end{equation}
and
\begin{equation}
 \begin{split}
f^{{\emph{NL}}}(r) = &\exp \left( { - \int_h^{{r_2}} {{{\Pr }^{\emph{L}}}\left( {u} \right)2\pi u\lambda du} } \right)\\&\times \exp \left( { - \int_h^r {\left( {1 - {{\Pr }^\emph{L}}\left( {u} \right)} \right)2\pi u\lambda du} } \right)\\&\times \left( {1 - {\Pr}^\emph{L}\left( {r} \right)} \right) \times 2\pi r\lambda,
 \end{split}
\end{equation}
where ${r_1}$ and ${r_2}$ are the solutions of $\zeta ^{\emph{NL}}({r_1}) = \zeta ^\emph{L}(r)$ and ${\zeta ^\emph{L}}({r_2}) = \zeta ^{\emph{NL}}(r)$, respectively.\\
Moreover, $\Pr \left[ {\frac{{P{\zeta ^\emph{L}}(r)g}}{{{I_r} + {N_0}}} > \gamma } \right]$ and ${\Pr \left[ {\frac{{P\zeta ^{\emph{NL}}(r)g}}{{{I_r} + {N_0}}} > \gamma } \right]}$ are expressed by
\begin{equation}
 \begin{split}
 {\Pr \left[ {\frac{{P\zeta ^\emph{L}(r)g}}{{{I_r} + {N_0}}} > \gamma } \right]} = \exp \left( { - \frac{{\gamma {N_0}}}{{P\zeta ^\emph{L}\left( r \right)}}} \right){{\mathcal{L}}_{{I_r}}}\left( {\frac{\gamma }{{P\zeta ^\emph{L}\left( r \right)}}} \right),
\end{split}
\end{equation}
and
\begin{equation}
 \begin{split}
 {\Pr \left[ {\frac{{P\zeta ^{\emph{NL}}(r)g}}{{{I_r} + {N_0}}} > \gamma } \right]} = \exp \left( { - \frac{{\gamma {N_0}}}{{P\zeta ^{\emph{NL}}\left( r \right)}}} \right){{\cal L}_{{I_r}}}\left( {\frac{\gamma }{{P\zeta ^{\emph{NL}}\left( r \right)}}} \right),
\end{split}
\end{equation}
where ${\cal L}_{{I_r}}$ is the Laplace transform of ${I_r}$ in the computation of interference.
\end{theorem}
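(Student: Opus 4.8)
The plan is to use the standard stochastic-geometry machinery: decompose the coverage event according to whether the serving UAV-BS is reached over a LoS or an NLoS link, condition on the 3D distance $r$ to that serving BS, and then exploit the fact that the Rayleigh fading gain $g$ is exponentially distributed to reduce the conditional coverage probability to a Laplace transform of the aggregate interference.

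First I would apply the total probability theorem to split ${p^{\textrm{cov}}} = {T^{\textrm{L}}} + {T^{\textrm{NL}}}$, where ${T^{\textrm{L}}}$ (resp.\ ${T^{\textrm{NL}}}$) is the joint probability that the typical UE is served over a LoS (resp.\ NLoS) link and that the resulting SINR exceeds $\gamma$. Because association is by strongest average received power and each branch $\zeta^{\textrm{L}},\zeta^{\textrm{NL}}$ is strictly decreasing in $r$, ``served by a LoS BS at distance $r$'' is equivalent to the three events: (i) a LoS BS sits at distance $r$; (ii) no other LoS BS is closer than $r$; (iii) no NLoS BS yields a smaller path loss than $\zeta^{\textrm{L}}(r)$, which by monotonicity of $\zeta^{\textrm{NL}}$ means no NLoS BS within distance $r_1$, the solution of $\zeta^{\textrm{NL}}(r_1)=\zeta^{\textrm{L}}(r)$.

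Next I would use the independent-thinning property of the HPPP. Projecting the UAVs onto the ground plane, those within horizontal radius $w$ form a Poisson process of mean $\pi w^2\lambda$; substituting $w=\sqrt{r^2-h^2}$ gives an equivalent intensity with element $2\pi r\lambda\,dr$ for $r\ge h$, which accounts for the lower limit $h$ in the integrals. Thinning by $\Pr^{\textrm{L}}(\cdot)$ and $1-\Pr^{\textrm{L}}(\cdot)$ produces two independent inhomogeneous PPPs of LoS and NLoS BSs. Multiplying the void probability of the LoS process over $[h,r]$, the void probability of the NLoS process over $[h,r_1]$, and the intensity $\Pr^{\textrm{L}}(r)\,2\pi r\lambda$ of a LoS BS at $r$ yields exactly $f^{\textrm{L}}(r)$; the same argument with the two link types exchanged, and with $r_2$ solving $\zeta^{\textrm{L}}(r_2)=\zeta^{\textrm{NL}}(r)$, yields $f^{\textrm{NL}}(r)$.

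Finally, conditioned on a LoS serving BS at distance $r$, $\Pr[\textrm{SINR}>\gamma]=\Pr[g>\gamma(I_r+N_0)/(P\zeta^{\textrm{L}}(r))]$; since $g\sim\textrm{Exp}(1)$ is independent of $I_r$, this equals $\exp(-\gamma N_0/(P\zeta^{\textrm{L}}(r)))\,{\cal L}_{I_r}(\gamma/(P\zeta^{\textrm{L}}(r)))$, with ${\cal L}_{I_r}$ the Laplace transform of $I_r$ (the interferer set and hence $I_r$ inheriting the exclusion regions $r$ and $r_1$ from the association step), and analogously on the NLoS branch. Integrating these conditional probabilities against $f^{\textrm{L}}(r)$ and $f^{\textrm{NL}}(r)$ over $r\in[h,\infty)$ and summing gives the stated formula. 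I expect the main obstacle to be step (iii): correctly carrying the strongest-signal association rule across the two path-loss branches, i.e.\ defining the auxiliary distances $r_1,r_2$ and matching the exclusion regions in the void probabilities and in $I_r$; once this geometry is fixed, everything else is a routine computation.
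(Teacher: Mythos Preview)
Your proposal is correct and follows essentially the same approach as the paper's proof in Appendix~A: decompose by LoS/NLoS serving link, derive $f^{\textrm{L}}$ and $f^{\textrm{NL}}$ from the two void-probability conditions on the thinned point processes, and use the exponential tail of $g$ to reduce the conditional coverage probability to a Laplace transform of $I_r$. Your write-up is in fact more explicit than the paper's (e.g., the projection argument justifying the lower integration limit $h$ and the $2\pi r\lambda\,dr$ intensity), whereas the paper simply states the conditions and cites \cite{ding2016performance} for the form of $f^{\textrm{L}},f^{\textrm{NL}}$.
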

\begin{proof}
See Appendix A.
\end{proof}

In Theorem 1, we assume UAVs are randomly hovering in the network. On the other hand, if we consider the mobility of UAVs, the system performance can surely be improved. However, the analysis of such mobile UAVs is difficult because we need to further consider UAV mobility control management. Fortunately, we can instead consider a UAV teleportation model, where UAVs can instantaneously move to the positions directly overhead the users to show the upper-bound performance of a UAV network. In this case, each user will be associated with its UAV-BS overhead. Such upper-bound performance is characterized in the following Corollary 1, which is derived from Theorem 1.

\begin{cor}
The coverage probability of teleporting UAVs can be expressed as
\begin{equation}
 \begin{split}
 p_{\rm{upper}}^{{\mathop{\rm cov}} }(\lambda ,\gamma )
 = &\Pr \left[ {\frac{{P\zeta ^{\rm{L}}(h)g}}{{{I_r} + {N_0}}} > \gamma } \right] + \Pr \left[ {\frac{{P\zeta ^{\rm{NL}}(h)g}}{{{I_r} + {N_0}}} > \gamma } \right]\\
 = &\exp \left( { - \frac{{\gamma {N_0}}}{{P\zeta ^\emph{L}\left( h \right)}}} \right){{\mathcal{L}}_{{I_r}}}\left( {\frac{\gamma }{{P\zeta ^\emph{L}\left( h \right)}}} \right) \\
 &+ \exp \left( { - \frac{{\gamma {N_0}}}{{P\zeta ^{\emph{NL}}\left( h \right)}}} \right){{\cal L}_{{I_r}}}\left( {\frac{\gamma }{{P\zeta ^{\emph{NL}}\left( h \right)}}} \right).
 \end{split}
\end{equation}
\end{cor}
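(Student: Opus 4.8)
The plan is to obtain Corollary~1 as the degenerate special case of Theorem~1 in which the randomness of the serving distance is removed. In the teleportation model every UAV instantaneously relocates to the point directly above its associated user, so the vertical separation $h$ is the \emph{only} separation that can occur on a serving link. Hence the serving $3$D distance $r$, which in Theorem~1 is a random variable with densities $f^{\textrm{L}}(r)$ and $f^{\textrm{NL}}(r)$ supported on $[h,\infty)$, now concentrates on the single value $r=h$, and the whole computation reduces to evaluating the conditional coverage probabilities of Theorem~1 at $r=h$.

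Concretely, rather than pushing $r\downarrow h$ inside the distance-measure formula of Theorem~1, I would mirror the short conditioning argument underlying Theorem~1 directly in the teleportation setting. The serving distance is deterministically $h$; given that, the serving link is LoS with probability $\Pr^{\textrm{L}}(h)$ and NLoS with probability $\Pr^{\textrm{NL}}(h)=1-\Pr^{\textrm{L}}(h)$; and, conditioned further on the link type, the coverage probability is exactly the corresponding expression from Theorem~1 evaluated at $r=h$, namely $\exp\!\left(-\gamma N_0/(P\zeta^{\textrm{L}}(h))\right)\mathcal{L}_{I_r}\!\left(\gamma/(P\zeta^{\textrm{L}}(h))\right)$ for LoS and the analogous one with $\zeta^{\textrm{NL}}$ for NLoS, since the exponential channel gain $g$ and the interference Laplace transform $\mathcal{L}_{I_r}$ enter in precisely the same way as in the proof of Theorem~1. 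Averaging over the link type produces $\Pr^{\textrm{L}}(h)$ and $\Pr^{\textrm{NL}}(h)$ as weights; because both conditional coverage probabilities lie in $[0,1]$ and these weights sum to $1$, dropping them yields the stated sum as a valid upper bound. One can sanity-check consistency by letting $r\downarrow h$ in $f^{\textrm{L}}$ and $f^{\textrm{NL}}$, where the integration windows $[h,r_1]$, $[h,r]$, $[h,r_2]$ collapse and the thinning exponentials tend to $1$.

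The main obstacle I anticipate is the treatment of the interference term $I_r$. The serving UAV has teleported above the typical user, but the interfering UAVs teleport above \emph{their own} users, not above the typical one, so one must argue that, from the typical user's viewpoint, the interferers are still adequately described by the original HPPP at height $h$ and therefore $\mathcal{L}_{I_r}(\cdot)$ is unchanged — this is what makes the construction a legitimate (approximate) upper bound rather than an exact characterization, and it is the step that warrants the most care. Beyond that, the argument is purely the specialization $r\mapsto h$ of quantities already computed for Theorem~1, so no new calculation is required; in particular the derivation of $p^{\textrm{cov}}_{\textrm{upper}}(\lambda,\gamma)$ does not re-derive $\mathcal{L}_{I_r}$ but simply invokes it at the arguments $\gamma/(P\zeta^{\textrm{L}}(h))$ and $\gamma/(P\zeta^{\textrm{NL}}(h))$.
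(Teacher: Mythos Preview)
Your approach is essentially the same as the paper's: the corollary is obtained from Theorem~1 by replacing the random serving distance $r$ with the deterministic value $h$, since in the teleportation model the associated UAV sits directly overhead. The paper's own justification is a single sentence to this effect and does not address the two subtleties you raise (the missing $\Pr^{\textrm{L}}(h)$/$\Pr^{\textrm{NL}}(h)$ weights in the stated sum, and whether the interferers remain HPPP after teleportation); your treatment is therefore more careful than the original, but not different in spirit.
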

In this case,  the associated UAV is set at the positions overhead the users, so the space distance from users to their associated UAVs is $h$ rather than $r$.
In comparison with the case of HPPP distributed UAVs, the case of teleporting UAVs can provide the user with the strongest received signal power due to the minimized distance between them and the highest probability of having LoS connection. As a result, this teleporting model gives the upper bound of network performance.
The detailed discussion on such upper-bound performance will be shown in Sec.V-C.

According to \cite{ding2017new}, the ASE can be expressed as
\begin{equation}
 \begin{split}
 {A^{{\rm{ASE}}}}\left( {\lambda ,{\gamma _{\rm{0}}}} \right) = &\frac{\lambda }{{\ln 2}}\int_{{\gamma _0}}^{ + \infty } {\frac{{{p^{{\rm{cov}}}}(\lambda ,\gamma )}}{{1 + \gamma }}} d\gamma  \\&+ \lambda {\log _2}\left( {1 + {\gamma _{\rm{0}}}} \right){p^{{\rm{cov}}}}(\lambda ,{\gamma _0}),
 \end{split}
\end{equation}
where ${\gamma _0}$ is the minimum SINR threshold for UE to work normally.

\section{Simulation Results}
To find the appropriate path loss model when UAVs fly at a medium altitude, we use simulation results to demonstrate the coverage probability and the ASE of three LoS probability models and make a comparison. Parameters adopted in simulation are: $P=24$ dBm, $N_0=-95$ dBm \cite{3GPP2012TR36.828}, $\gamma_0=0$ dB, $C=11.95$, $B=0.136$ \cite{mozaffari2016unmanned}. To obtain the numerical results at the medium height, we choose to analyze UAVs at the height of 50m and 100m, which are the most practical cases in reality. For the high-altitude model, the relative parameters are: $A^\textrm{L}=10.38$, $A^\textrm{NL}=14.54$, $\alpha^\textrm{L}=2.09$, $\alpha^\textrm{NL}=3.75$ \cite{galkin2016deployment}, \cite{fotouhi2017dynamic}.
For the low-altitude model, path loss parameters are: $A^\textrm{L}=10.34$, $A^\textrm{NL}=13.11$, $\alpha^\textrm{L}=2.42$, $\alpha^\textrm{NL}=4.28$ \cite{3GPP2012TR36.828}. For the ultra-low-altitude model, path loss parameters are: $A^\textrm{L}=10.38$, $A^\textrm{NL}=14.54$, $\alpha^\textrm{L}=2.09$, $\alpha^\textrm{NL}=3.75$ \cite{3GPP2012TR36.828}.
\subsection{The coverage probability for hovering UAVs}
Fig. \ref{coverage lower 50m} and Fig. \ref{coverage lower 100m} show the comparison of the coverage probability for UAVs hovering at 50m and 100m based on the investigated three models of path loss, i.e., the high-altitude model, the low-altitude model and the ultra-low-altitude model.

It can be seen from Fig. \ref{coverage lower 50m} that with the increase of the UAV density, the coverage probability of the high-altitude model first rises to the peak and then decreases. The optimal UAV density for this model is about 10 BSs/km$^2$. As for the low-altitude model, the performance trend is similar to that of the high-altitude one, with a slightly different optimal density around 6 BSs/km$^2$. The explanations of these phenomena are:
\begin{itemize}
\item For a sparse UAV-BSs density, the distance from associated UAV-BS to UE decreases with the increasing UAV-BS density and the associated UAV-BS is more likely to have a LoS transmission with UE, so the coverage probability grows as the UAV-BS density increases.
\item For a dense UAV-BSs density, although the associated UAV have a higher probability to transmit data via a LoS channel, other UAVs also produce strong interference through LoS paths, thus, the coverage probability decreasing after reaching the optimal point.
\end{itemize}

\begin{figure}
  \centering
  \includegraphics[width=2.8in]{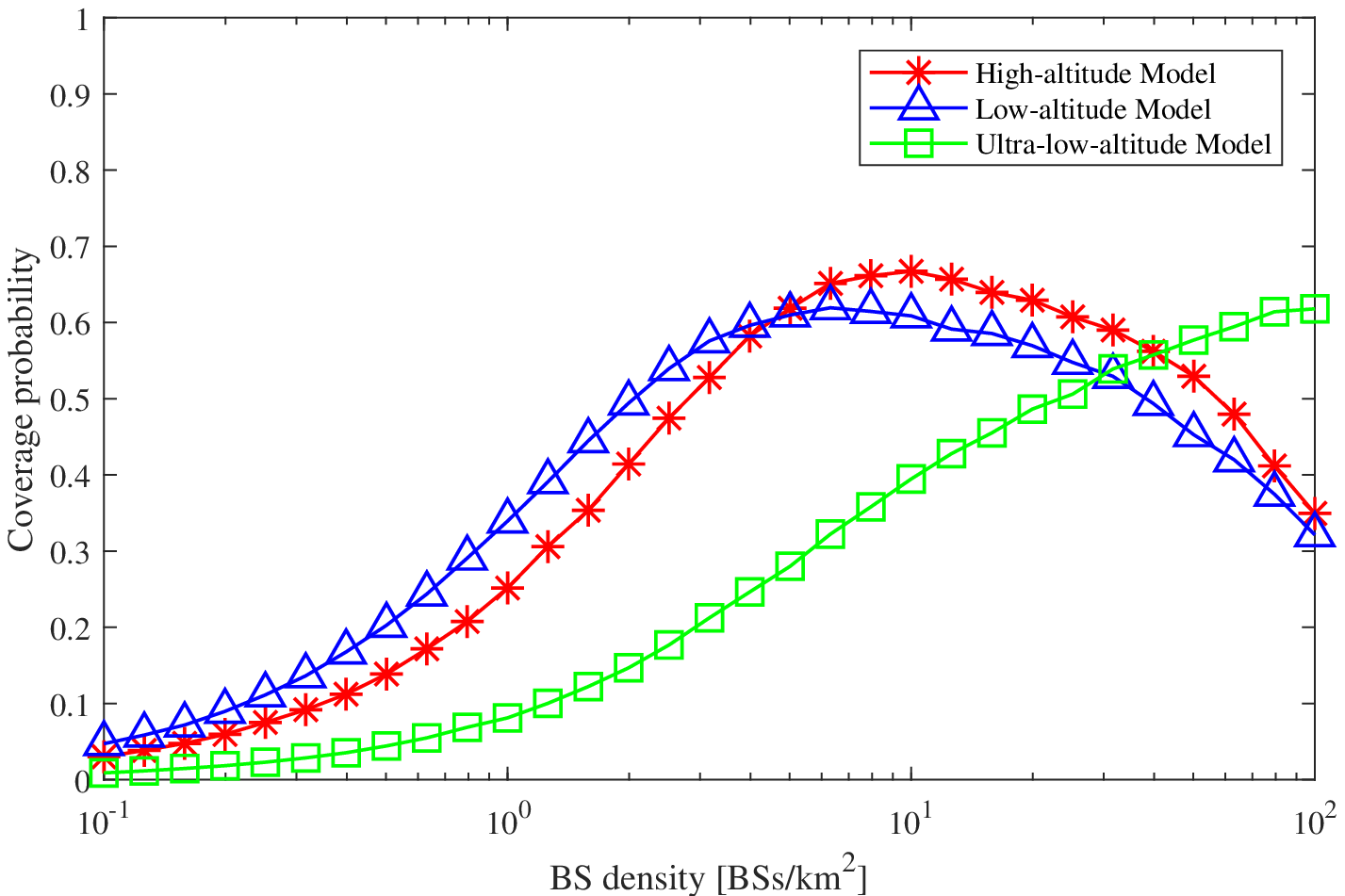}\\
  \caption{Comparison of the coverage probability for hovering UAVs (h=50m)}
  \label{coverage lower 50m}
\end{figure}

\begin{figure}
  \centering
  \includegraphics[width=2.8in]{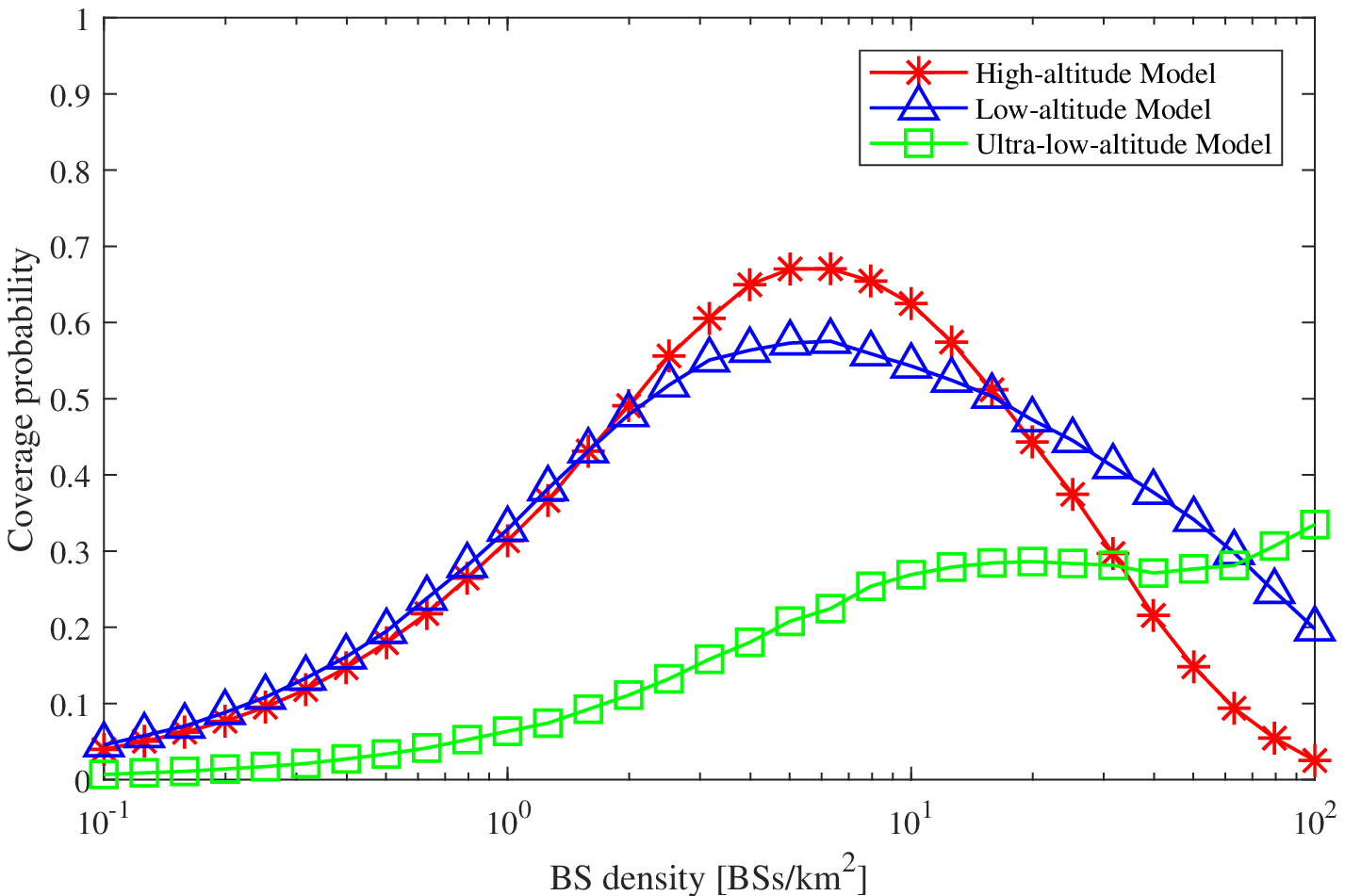}\\
  \caption{Comparison of the coverage probability for hovering UAVs (h=100m)}
  \label{coverage lower 100m}
\end{figure}
For the ultra-low-altitude model, the performance is significantly different with the other two models. The reason is that the ultra-low-altitude model is designed for a scenario where UAVs fly at a relatively low altitude and the transmission distance is quite limited. Furthermore, even when the UAV is hovering over user's location, the probability of having LoS connection is still low since the minimum distance from UAV-BS to UE is the height of UAV. As a result, the ultra-low-altitude model is not suitable for the practical UAV scenario with a height around 50$\sim$100 meters.

In Fig. \ref{coverage lower 100m}, we can see that the performance of the low-altitude model and high-altitude model is very similar when the UAV-BS density is less than 2 BSs/km$^2$. When the density is between 2 BSs/km$^2$ and 20 BSs/km$^2$, the coverage probability of the high-altitude model is higher than that of the low-altitude model, but the low-altitude model performs better than the high-altitude model when density is beyond 20 BSs/km$^2$.
\subsection{The ASE for hovering UAVs}
Fig. \ref{ase low 50m} shows the ASE performance of different path loss models for a height of 50m.
As can be seen from this figure, the ASE of the high-altitude model and the low-altitude model keep growing due to the increasing coverage probability, but the growing rate slows down when the density of UAVs is more than 10 BSs/km$^2$. This is because the declining coverage probability shown in Figs. 3 and 4 outweighs the increase of the UAV density. In this figure we can also find that the ASE for the ultra-low-altitude model differs from the other two. As a result, when the height of UAV is around 50m, the high-altitude model and the low-altitude model are equally good for the performance analysis of the UAV-based network.
\begin{figure}
  \centering
  \includegraphics[width=2.8in]{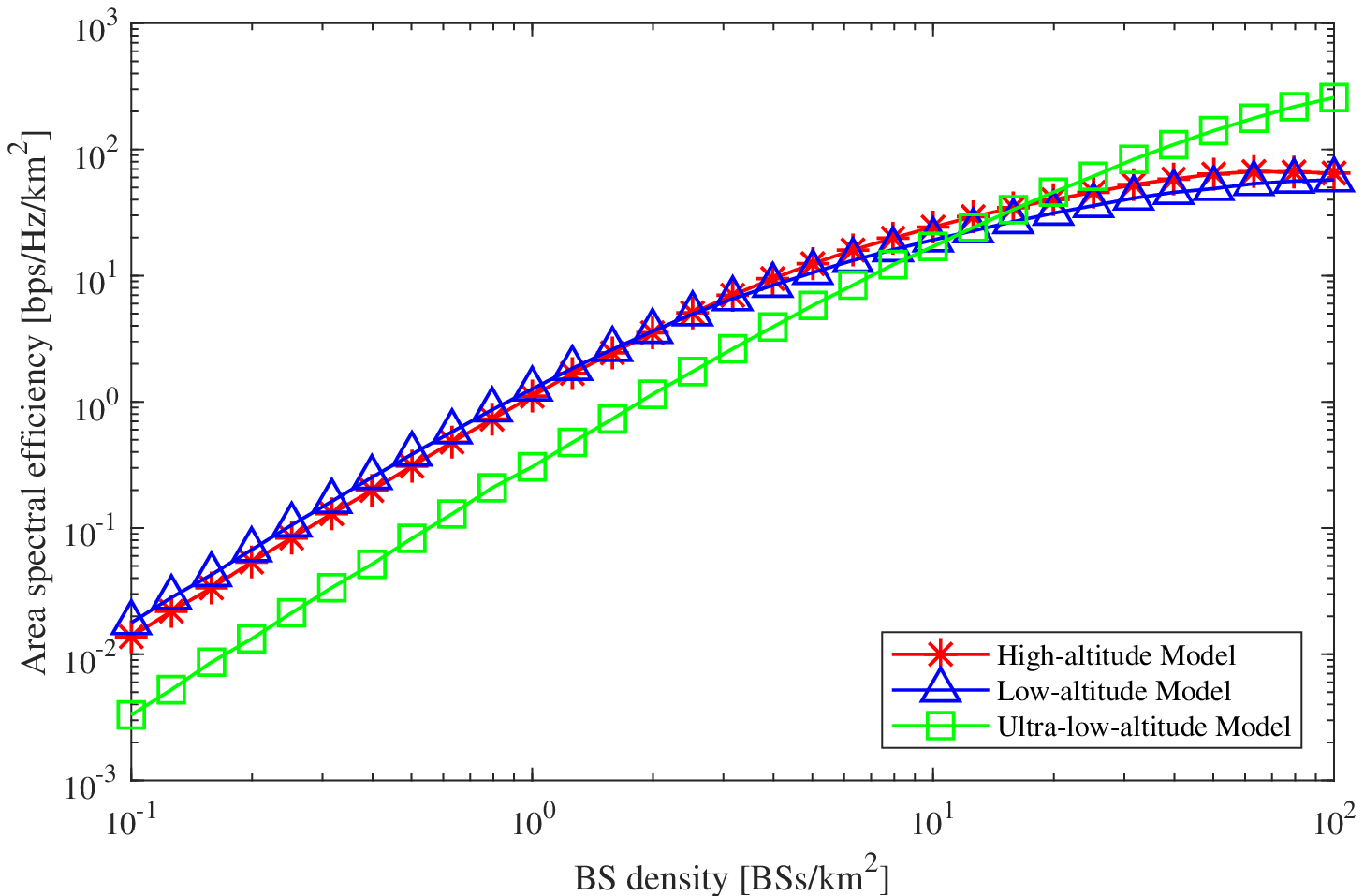}\\
  \caption{Comparison of the ASE for hovering UAVs (h=50m)}
  \label{ase low 50m}
\end{figure}
\begin{figure}
\centering
  \includegraphics[width=2.8in]{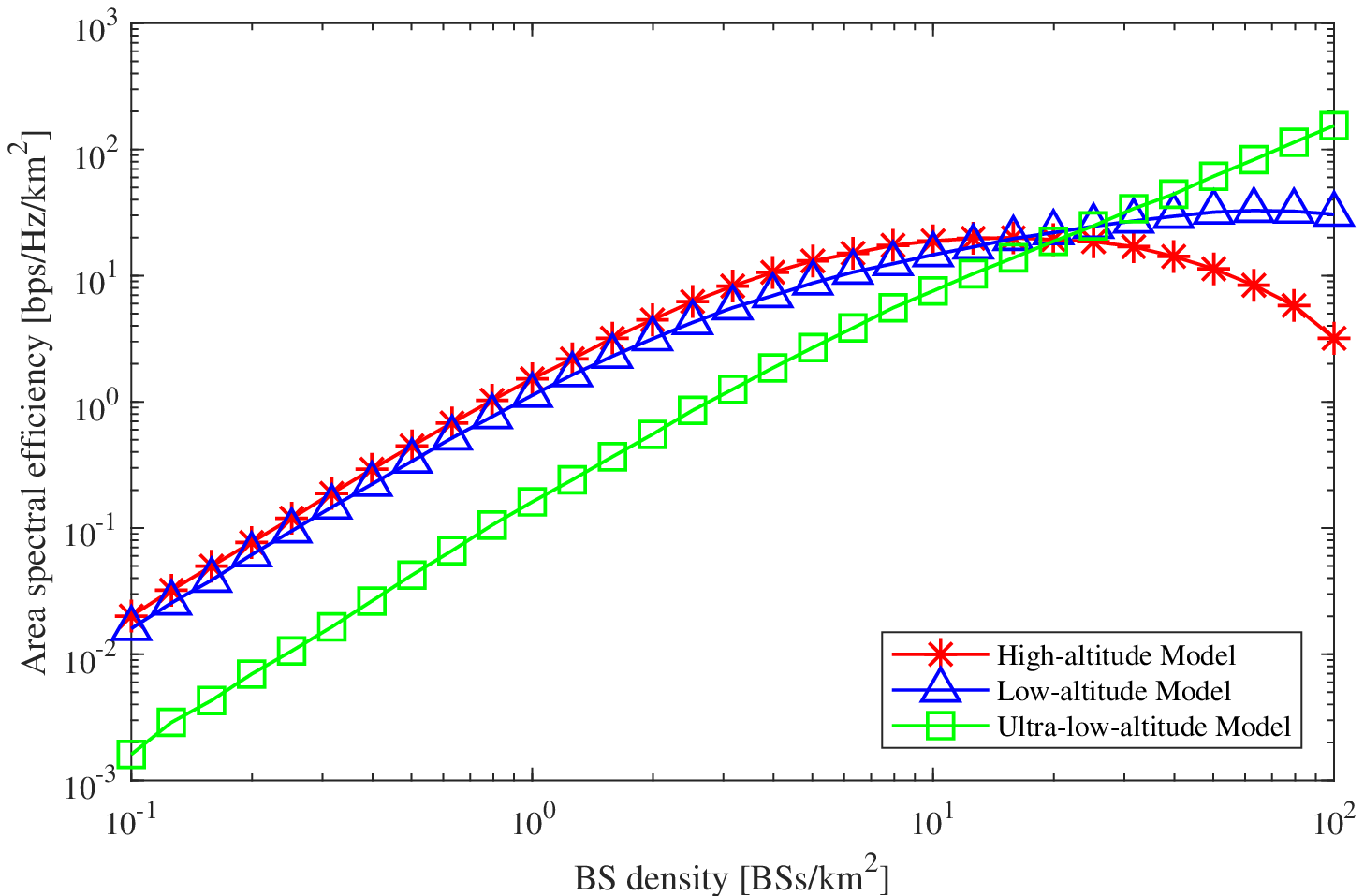}\\
  \caption{Comparison of the ASE for hovering UAVs (h=100m)}
  \label{ase low 100m}
\end{figure}

Fig. \ref{ase low 100m} shows the comparison of the ASE in different models when the height of UAV is 100m. When the UAV-BS density is lower than 20 BSs/km$^2$, the ASE of the high-altitude model and that of the low-altitude model leave the similar trail. However, after reaching the density of 10 BSs/km$^2$, their ASE performance diverges. The drop of the ASE for the high-altitude model indicates that the deceasing coverage dominates the ASE performance compared with the growing UAV density. Considering that the low-altitude model was developed for a height around 32 meters, it might not be suitable for the UAVs flying at 100 meters studied here. Hence, the high-altitude model might be more appropriate here. However, we may need to conduct real-life channel measurement to confirm this conjecture.

\subsection{The performance for Teleporting UAVs}
It can be seen from the previous simulation that when UAVs fly at the height of 50m, the coverage probability and the ASE performance of the high-altitude model and the low-altitude model are very similar. However, when the height of UAV is at 100m, the performance of these two models deviate in dense networks. To verify whether these two models are still equally good for teleporting UAVs at 50m, we investigate and compare their coverage probability and ASE performance in this subsection.
\begin{figure}
  \centering
  \includegraphics[width=2.8in]{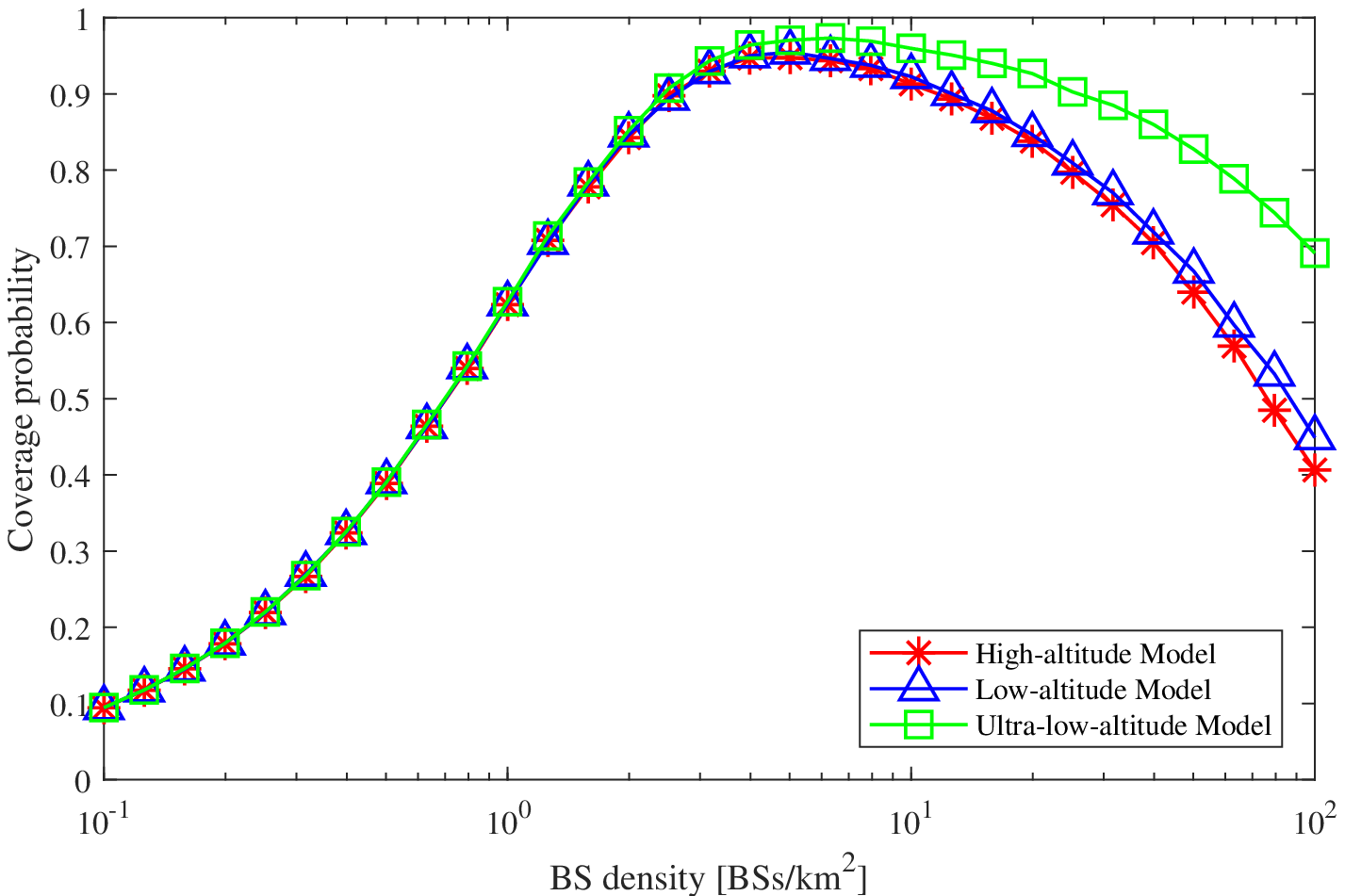}\\
  \caption{Comparison of the coverage probability for teleporting UAVs (h=50m)}
  \label{coverage upper 50m}
\end{figure}

\begin{figure}
\centering
  \includegraphics[width=2.8in]{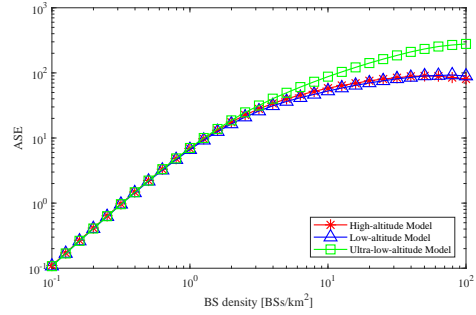}\\
  \caption{Comparison of the ASE for teleporting UAVs (h=50m)}
  \label{ase up 50m}
\end{figure}

From Fig. \ref{coverage upper 50m} and Fig. \ref{ase up 50m}, we can see that the high-altitude model and the low-altitude model generate similar results.  In Fig. \ref{coverage upper 50m}, we can see that the optimal UAV density for these two models can be found at around 6 BSs/km$^2$. Fig. \ref{ase up 50m} shows that the ASE of both increases linearly at first, and then grows slowly.

\subsection{Comparison of the Upper and Lower bounds of Performance}
From Fig. \ref{coverage lower 50m} and Fig. \ref{ase low 50m}, we can see that the high-altitude model and the low-altitude model are equally good for network performance analysis. Hence, we choose the high-altitude model to show the difference between the upper bound of ASE and the lower bound of ASE when the UAVs fly at the height of 50m. Such comparison is displayed in Fig. \ref{ase compare}. It can be seen that when the density is lower than 10 BSs/km$^2$, the gap between the upper bound and the lower bound is large, which shows great promise for optimization of UAV mobility in UAV-enabled networks. However, as UAV density increases, the ASE gain due to the UAV mobility becomes marginal, e.g., at a UAV density of 100 UAV-BSs/km$^2$.
\begin{figure}
\centering
  \includegraphics[width=2.8in]{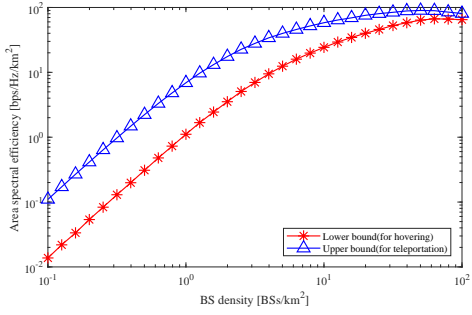}\\
  \caption{Upper bound and lower bound of the ASE for the high-altitude model when h=50m}
  \label{ase compare}
\end{figure}
\section{Conclusion}
In this paper, we have studied the performance of UAV-enabled wireless networks. In order to identify the proper path loss models for UAVs flying at practical heights, such as 50m and 100m, we first analyze the performance when adopting the conventional high-altitude model based on the elevation angle. Then we further investigate the coverage probability and the ASE by using path loss models which have been widely applied to terrestrial communications, including the low-altitude model and the ultra-low-altitude model. From simulation results, we find that performance for networks with high-altitude model and the low-altitude model are equally good when UAVs fly at the height of 50m, while the performance trend of the ultra-low-altitude model is quite different. We also found that the number of the UAVs should be optimized for the benefits of the networks, which sheds new light on the design of the future UAV-enabled networks.

\begin{appendices}
\section{Proof of Theorem \ref{1}}
When transmission between UAV-BS and UE is LoS, two conditions should be satisfied: 1) The distance between UE and UAV-BS is $r$, and there is no UAV-BS of LoS path within $r$.
2) There is no UAV-BS of NLoS path within ${r_1}$.
Based on these conditions, $f^\textrm{L}(r)$ is computed as \cite{ding2016performance}
\begin{equation}
 \begin{split}
f^\textrm{L}(r) = &\exp \left( { - \int_0^{{r_1}} {\left( {1 - {{\Pr }^\textrm{L}}\left( {u} \right)} \right)2\pi u\lambda du} } \right)\\
 &\times \exp \left( { - \int_0^r {{{\Pr }^\textrm{L}}\left( {u} \right)2\pi u\lambda du} } \right)\\
 &\times {\Pr} ^\textrm{L}\left( {r} \right) \times 2\pi r\lambda,
 \end{split}
\end{equation}

When transmission between UAV-BS and UE is NLoS, two conditions should be satisfied: 1) The distance between UE and UAV-BS is $r$, and there is no UAV-BS of NLoS path within $r$.
2) There is no UAV-BS of LoS path within ${r_2}$.
So $f^\textrm{NL}(r)$ can be derived as
\begin{equation}
 \begin{split}
f^{\textrm{NL}}(r) = &\exp \left( { - \int_0^{{r_2}} {{{\Pr }^\textrm{L}}\left( {u} \right)2\pi u\lambda du} } \right)\\
 &\times \exp \left( { - \int_0^r {\left( {1 - {{\Pr }^\textrm{L}}\left( {u} \right)} \right)2\pi u\lambda du} } \right)\\
 &\times \left( {1 - {\Pr}^\textrm{L}\left( {r} \right)} \right) \times 2\pi r\lambda.
 \end{split}
\end{equation}
 ${\Pr \left[ {{\rm{SINR > }}\gamma \left| r \right.} \right]}$ conditioned on $r$ can be expressed as
\begin{equation}
\begin{split}
&\Pr \left[ {\frac{{P{\zeta ^{\rm{L}}}(r)g}}{{{I_r} + {N_0}}} > \gamma } \right]
=  {\mathbb{E}_{\left[ {{I_r}} \right]}}\left\{ {\exp \left( { - \frac{{\gamma \left( {{I_r} + {N_0}} \right)}}{{P{\zeta ^{\rm{L}}}(r)}}} \right)} \right\}\\
=& \exp \left( { - \frac{{\gamma {N_0}}}{{P{\zeta ^{\rm{L}}}(r)}}} \right){\mathbb{E}_{\left[ {{I_r}} \right]}}\left\{ {\exp \left( { - \frac{{\gamma {I_r}}}{{P{\zeta ^{\rm{L}}}(r)}}} \right)} \right\}\\
=& \exp \left( { - \frac{{\gamma {N_0}}}{{P{\zeta ^{\rm{L}}}(r)}}} \right){\mathcal{L}_{{I_r}}}\left( {\frac{\gamma }{{P{\zeta ^{\rm{L}}}(r)}}} \right).
\end{split}
\end{equation}
Then $\Pr \left[ {\frac{{P{\zeta ^{\rm{NL}}}(r)g}}{{{I_r} + {N_0}}} > \gamma } \right]$ can be derived in the similar way.
\end{appendices}

\bibliographystyle{IEEEtran}
\bibliography{UAV}
\end{document}